\documentclass[letterpaper, 10pt, twocolumn]{article}
\usepackage{titlesec}
\titleformat{\section}{\normalfont\large\bfseries}{\thesection}{1em}{} 
\titleformat{\subsection}{\normalfont\normalsize\bfseries}{\thesubsection}{1em}{} 
\titleformat{\subsubsection}{\normalfont\small\bfseries}{\thesubsubsection}{1em}{} 
\usepackage[letterpaper,top=0.75in,bottom=1in,left=0.625in,right=0.625in]{geometry}
\usepackage{cite}
\usepackage{amsmath,amssymb,amsfonts}
\usepackage{graphicx}
\usepackage{textcomp}

\usepackage{mathtools}
\usepackage{relsize}
\usepackage{tikz}
\usetikzlibrary{positioning, arrows.meta}

\usepackage{tabularx} 
\usepackage{graphicx} 
\usepackage{cite} 
\usepackage[final]{hyperref} 
\usepackage{amsmath} 
\usepackage{amssymb}  
\usepackage{mathtools, cuted,bm}
\usepackage{etoolbox}
\patchcmd{\proof}{\indent}{}{}{}
\usepackage{caption} 
\usepackage{tabularray}

\usepackage{tabularx}
\usepackage{array}
\newcolumntype{Y}{>{\centering\arraybackslash}X}

\usepackage[justification=centering]{caption}

\newcommand{\twodots}{.\kern-0.1em.}
\newcommand{\myldots}{\kern-0.05em.\kern-0.01em.\kern-0.01em.\kern0.01em}

\usepackage{caption}
\usepackage{subcaption}
\usepackage{esvect}
\usepackage{lipsum, color}

\newcommand{\R}{\mathbb{R}}

\newcommand{\fe}{\mathsf{f}}
\newcommand{\ve}{\mathsf{v}}

\usepackage{relsize}
\usepackage{tikz}
\graphicspath{{./Figures/}} 

\usepackage[margin=0.8cm]{caption}
\usepackage{pgfplots}
\pgfplotsset{compat=newest}
\usepackage{tabularx}
\pgfplotsset{plot coordinates/math parser=false}

\usepackage{ulem}
\usepackage{cancel}

\usepackage{booktabs}

\usepackage{amsthm}
\newtheorem{corollary}{Corollary}
\newtheorem{remark}{Remark}

\newtheorem{proposition}{Proposition}

\newcommand{\norm}[1]{\left\lVert#1\right\rVert}

\makeatletter
\def\munderbar#1{\underline{\sbox\tw@{$#1$}\dp\tw@\z@\box\tw@}}
\makeatother

\definecolor{olivegreen}{RGB}{128, 128, 0}

\usepackage{algpseudocode}
\usepackage{algorithm}

\usepackage{caption}

\newcommand\scalemath[2]{\scalebox{#1}{\mbox{\ensuremath{\displaystyle #2}}}}

\hypersetup{
	colorlinks=true,       
	linkcolor=black,        
	citecolor=black,        
	filecolor=magenta,     
	urlcolor=black         
}

\usepackage{amsthm}

\usepackage{cite}
\usepackage{amsmath,amssymb,amsfonts}
\usepackage{graphicx}
\usepackage{textcomp}

\usepackage{mathtools}
\usepackage{relsize}
\usepackage{tikz}
\usepackage{threeparttable}
\usepackage{siunitx}

\usepackage{tabularx} 
\usepackage{graphicx} 
\usepackage{cite} 
\usepackage[final]{hyperref} 
\usepackage{amsmath} 
\usepackage{amssymb}  
\usepackage{mathtools, cuted,bm}
\usepackage{etoolbox}
\patchcmd{\proof}{\indent}{}{}{}
\usepackage{caption} 
\usepackage{tabularray}

\usepackage[justification=centering]{caption}

\usepackage{caption}
\usepackage{subcaption}
\usepackage{esvect}
\usepackage{lipsum, color}

\newcommand{\zs}{z^{\mathrm{s}}}
\newcommand{\vs}{v^{\mathrm{s}}}
\newcommand{\yr}{y^{\mathrm{r}}}
\newcommand{\uc}{u^{\mathrm{c}}}
\newcommand{\up}{u^{\mathrm{p}}}
\newcommand{\Xs}{X^{\mathrm{s}}}
\newcommand{\xr}{\mathbf{x}^{\mathrm{r}}}

\usepackage{relsize}
\usepackage{tikz}
\graphicspath{{./Figures/}} 
\usepackage[margin=0.8cm]{caption}
\usepackage{pgfplots}
\pgfplotsset{compat=newest}
\usepackage{tabularx}
\pgfplotsset{plot coordinates/math parser=false}

\makeatletter
\def\munderbar#1{\underline{\sbox\tw@{$#1$}\dp\tw@\z@\box\tw@}}
\makeatother

\definecolor{olivegreen}{RGB}{128, 128, 0}

\usepackage{algorithm}
\usepackage{algpseudocode}

\usepackage{caption}

\usepackage{relsize}
\usepackage{tikz}
\usepackage[margin=0.1cm]{caption}
\usepackage{pgfplots}
\pgfplotsset{compat=newest}
\usepackage{tabularx}
\pgfplotsset{plot coordinates/math parser=false}
\usepackage{booktabs}

\usepackage{ulem}
\usepackage{cancel}

\makeatletter
\def\munderbar#1{\underline{\sbox\tw@{$#1$}\dp\tw@\z@\box\tw@}}
\makeatother
\definecolor{olivegreen}{RGB}{128, 128, 0}

\usepackage{algpseudocode}
\usepackage{algorithm}

\hypersetup{
	colorlinks=true,       
	linkcolor=black,        
	citecolor=black,        
	filecolor=magenta,     
	urlcolor=black         
}

\usepackage{algorithm}
\usepackage{algpseudocode}

\newtheorem{theorem}{Theorem}
\usepackage{threeparttable}

\usepackage{comment}
\newtheorem{lemma}[theorem]{Lemma}

\begin{document}
	\title{\textbf{Dual MPC for quasi-Linear Parameter Varying systems
	}}
	
	\author{Sampath Kumar Mulagaleti and Alberto Bemporad, \textit{Fellow, IEEE}
		\thanks{This work was funded by the European Union (ERC Advanced Research Grant COMPACT, No. 101141351). Views and opinions expressed are however those of the authors only and do not necessarily reflect those of the European Union or the European Research Council. Neither the European Union nor the granting authority can be held responsible for them.
			The authors are with the IMT School for Advanced Studies Lucca, Italy. (Email: \url{s.mulagaleti@imtlucca.it},\url{alberto.bemporad@imtlucca.it})}
	}


	\newcounter{tempEquationCounter}
	\newcounter{thisEquationNumber}
	\newenvironment{floatEq}
	{\setcounter{thisEquationNumber}{\value{equation}}\addtocounter{equation}{1}
		\begin{figure*}[!t]
			\normalsize\setcounter{tempEquationCounter}{\value{equation}}
			\setcounter{equation}{\value{thisEquationNumber}}
		}
		{\setcounter{equation}{\value{tempEquationCounter}}
			\hrulefill\vspace*{4pt}
		\end{figure*}
		
	}
	\newenvironment{floatEq2}
	{\setcounter{thisEquationNumber}{\value{equation}}\addtocounter{equation}{1}
		\begin{figure*}[!t]
			\normalsize\setcounter{tempEquationCounter}{\value{equation}}
			\setcounter{equation}{\value{thisEquationNumber}}
		}
		{\setcounter{equation}{\value{tempEquationCounter}}
		\end{figure*}
	}

	\maketitle
	
	\begin{abstract}
		We present a dual Model Predictive Control (MPC) framework for the simultaneous identification and control of quasi-Linear Parameter Varying (qLPV) systems. The framework is composed of an online estimator for the states and parameters of the qLPV system, and a controller that leverages the estimated model to compute inputs with a dual purpose: tracking a reference output while actively exciting the system to enhance parameter estimation. The core of this approach is a robust tube-based MPC scheme that exploits recent developments in polytopic geometry to guarantee recursive feasibility and stability in spite of model uncertainty. The effectiveness of the framework in achieving improved tracking performance while identifying a model of the system is demonstrated through a numerical example.
	\end{abstract}
	
	\thispagestyle{empty}
	\pagestyle{empty}
	
\section{Introduction}
The closed-loop performance of a Model Predictive Control (MPC) scheme relies heavily on the quality of the model used. When dealing with minor prediction inaccuracies, robust MPC formulations provide an effective mechanism to ensure constraint satisfaction and stability \cite{rawlings2017model}. However, as the complexity of control systems grows, the dynamical models employed are often highly nonlinear and identified directly from data \cite{Bemporad2025}. For such systems, characterizing global uncertainty bounds is often not viable, and even if characterized, a robust MPC scheme using them might be excessively conservative. To ameliorate this conservativeness, it becomes necessary to adapt the model online utilizing input-output data generated in closed loop. Typically, such adaptive MPC schemes \cite{Clarke1996} couple an online parameter estimation mechanism with an MPC formulation endowed with stability guarantees in the presence of model changes.

However, passive adaptation can be significantly enhanced by actively generating control inputs that simultaneously improve the quality of the estimated model. This principle forms the core of dual control, where the applied input actively excites the system to facilitate parameter estimation~\cite{feldbaum1960dual1,Bar-Shalom1974,Mesbah2018}. In the context of linear systems with input-state measurements, this is generally achieved by ensuring the computed input is persistently exciting~\cite{Marafioti2013,Lorenzen2019,Cannon2023}, or directly minimizing the prediction uncertainty~\cite{Heirung2015}. For nonlinear systems with input-state measurements, bounded-uncertainty estimators are typically employed alongside active exploration strategies to shrink parameter uncertainty sets~\cite{Guay2009,Kohler2021,Berberich2025,Soloperto2023,Baltussen2025}.

When restricted to only input-output measurements, the dual-control problem becomes markedly more challenging \cite{Heirung2018}. A prevalent strategy in such settings involves using an Extended Kalman Filter (EKF) to perform joint state-parameter estimation along with their associated uncertainty, which are then propagated within the MPC framework to simultaneously minimize uncertainty alongside the control objective, and ensure probabilistic constraint satisfaction \cite{Lucia2014,Houska2017,Mesbah2018,Messerer2023}. While effective, propagating this uncertainty inherently yields a nonlinear and nonconvex optimization problem, necessitating the development of specialized solvers. Furthermore, unlike approaches that assume full state feedback, these output-feedback schemes typically lack guarantees regarding closed-loop stability in the presence of online parameter variations.

\textbf{Contribution}. We present an approach to synthesize dual MPC schemes for quasi-Linear Parameter Varying (qLPV) systems, in which the scheduling function is restricted to the simplex using a $\mathrm{softmax}$ operator. This parameterization effectively models complex nonlinear dynamics \cite{mulagaleti2025}, while being suitable for the synthesis of robust MPC schemes since the nonlinearity is bounded by construction. We estimate online the state and parameters of the model using an estimator that is subject to constraints. These constraints are designed to ensure that the downstream dual MPC scheme, developed using a tube-based MPC (TMPC) approach, remains feasible. We solve the dual MPC problem in two stages, with the first stage computing a \textit{tracking} control input to drive the model to a neighborhood of an output reference, and the second stage a \textit{perturbing} input to promote active exploration. The bounds on this perturbation are user-specified, which is a standard exploration-exploitation tradeoff in dual MPC. To develop the TMPC scheme, we extend the approach of \cite{Badalamenti2024,Badalamenti2025} to compute configuration-constrained polytopic tubes \cite{Villanueva2024}. We demonstrate that the closed-loop is input-to-state (ISS) Lyapunov stable \cite{Limon2006} against parameter variations, and validate its effectiveness using a simple numerical example. 

\textbf{Notation}. We denote by $\mathbb{N}$ the set of natural numbers. For $n \in \mathbb{N}$, we define $\Delta_n:=\left\{x \in \R^{n} \middle| x \in [0,1], \sum_{i=1}^n x_i = 1\right\}$ as the simplex set.
We denote by $\mathbb{I}_n^m$ the set of integers between $n$ and $m$, and $\mathbf{I}_n$ as the identity matrix in $\R^{n \times n}$.
We define $\mathrm{softmax}(x):\R^{n} \to \R^{n}$ as the function with components $e^{x_i}/\sum_{j=1}^n e^{x_j}$ for $i \in \mathbb{I}_1^n.$ Given a vector $a \in \R^n$, $|a| \in \R^n$ denotes the element-wise absolute value vector. Given compact convex sets $M_1,\cdots,M_N \subseteq \R^{n}$, we define $\mathrm{CH}\{M_i,i \in \mathbb{I}_1^N\}:=\left\{\sum_{i=1}^N \lambda_i x_i \in \R^{n} \middle| x_i \in M_i, \lambda \in \Delta_N\right\}$ as the convex hull of the individual sets, and $M_i \oplus M_j \subseteq \R^{n}$ as the Minkowski sum of the sets $M_i$ and $M_j$. If $M_i=\{x\}$, i.e., its a singleton, then we denote the sum as $x \oplus M_j$.

\section{Problem Setup}
We want to control the nonlinear discrete-time dynamical system
\begin{align}
	\label{eq:plant}
	\mathbf{z}^+ = \mathbf{f}(\mathbf{z},u), && y=\mathbf{g}(\mathbf{z})
\end{align}
where $u \in \R^{n_u}$, $y \in \R^{n_y}$ and $\mathbf{z} \in \R^{n_{\mathbf{z}}}$ are the input, output, and state vectors
at time $t\in\mathbb{N}$, respectively, and $\mathbf{z}^+$ denotes the state at the next time step, under the following setting: $(a)$ The functions $\mathbf{f}$ and $\mathbf{g}$, the state dimension $n_{\mathbf{z}}$ are unknown; $(b)$  The functions $\mathbf{f}$ and $\mathbf{g}$ are continuously differentiable, and satisfy $0=\mathbf{f}(0,0)$ and $0=\mathbf{g}(0)$; $(c)$ The input is subject to given constraints $u \in \mathbb{U}:=\{u:|u|\leq \epsilon^u\}$; $(d)$ Only the output $y$ is measurable, and is subject to given convex constraints $y \in \mathbb{Y}$.

Under the above assumptions, we aim to design an output-reference tracking controller for the plant in a \textit{dual-control} framework, in which we identify a model of the plant~\eqref{eq:plant} while controlling it by applying a suitable input $u$. To achieve this goal, our dual-control scheme consists of the following components: $(a)$  A dynamical model of the plant with state $x \in \R^{n_x}$ and parameters $\theta \in \R^{n_{\theta}}$; $(b)$ A dual-objective MPC scheme that at each time $t$ uses the current state and parameter estimates to compute an input $u_t \in \mathbb{U}$ which balances output reference tracking with active exploration; $(c)$ A mechanism to estimate the state $x_{t+1}$ and parameters $\theta_{t+1}$ using the output $y_{t+1}$ of the plant.
We endow this closed-loop scheme with recursive feasibility and asymptotic stability guarantees using modern computational tools from polytopic geometry.
\section{Modelling}
We model the dynamics of \eqref{eq:plant} using the qLPV system
\begin{align}
	\label{eq:model}
	x^+ = A(p(x,u))x+B(p(x,u))u, && \hat{y}=Cx,
\end{align}
where $x \in \R^{n_x}$, $C \in \R^{n_y \times n_x}$, and the matrix-valued functions $(A(p),B(p))$ are parameterized as
\begin{align}
	\label{eq:LP}
	(A(p),B(p)):=\sum_{i=1}^{n_p} p_i(A_i,B_i),
\end{align}
with $A_i \in \R^{n_x \times n_x}$ and $B_i \in \R^{n_x \times n_u}$. We parameterize the scheduling function $p:\R^{n_x+n_u} \to \R^{n_p}$ as
\begin{align}
	\label{eq:softmax}
	p(x,u) = \mathrm{softmax}(\mathcal{N}(x,u)),
\end{align}
where $\mathcal{N}: \R^{n_x+n_u} \to \R^{n_p}$ is a feedforward neural network (FNN) with continuously differentiable activation units. The parameterization in \eqref{eq:softmax} results in $ p(x,u) \in \Delta_{n_p}$,
i.e., the scheduling variable always belongs to the simplex, and is hence bounded by construction.
We denote by $\theta \in \R^{n_{\theta}}$ the parameters of \eqref{eq:model}, which include the system matrices $\{A_i,  B_i\}$ and FNN parameters, and assume instead that $C$ is fixed and is a full row-rank matrix, such as a collection of $n_y$ rows of the identity matrix of order $n_x$, with $n_y\leq n_x$. 

\subsection{Combined state and parameter estimation}
We recall the general approach to estimate the state $x$ and parameters $\theta$ of \eqref{eq:model} using input-output measurements from the plant. Denoting $f(x,u,\theta):=A(p(x,u))x+B(p(x,u))u$, we introduce the augmented system
\begin{align}
	\label{eq:model_EKF}
	\zeta^+ = F(\zeta,u)+w, && y=\tilde{C} \zeta +v,
\end{align}
where $\zeta=(x,\theta)$ is the augmented state with dynamics $F(\zeta,u)=(f(x,u,\theta),\theta)$ and process noise $w \in \R^{n_x+n_{\theta}}$,
the output matrix is $\tilde{C}=[C \ 0]$, and output noise $v \in \R^{n_y}$. Using input-output measurements $u_{0:t}:=(u_0,\cdots,u_t)$ and $y_{0:t+1}:=(y_0,\cdots,y_{t+1})$, an observer estimates the state and parameter of \eqref{eq:model} at time $t+1$ as
\begin{align}
	\label{eq:general_observer}
	\hat{\zeta}_{t+1} = \hat{F}(u_{0:t},y_{0:t+1}).
\end{align}
For example, when using an Extended Kalman Filter (EKF), the process and measurement noise are assumed to satisfy $w \sim \mathcal{N}(0,Q_{\mathrm{e}})$ and $v \sim \mathcal{N}(0,R_{\mathrm{e}})$ respectively. Then, assuming that $\zeta_t \sim \mathcal{N}(\hat{\zeta}_t,\hat{P}_t)$, the mean and covariance of the state of \eqref{eq:model_EKF} are updated as%
\begin{subequations}
	\label{eq:EKF_steps}
	\begin{align}
		&\tilde{\zeta}_{t+1}=F(\hat{\zeta}_t,u_t),  \label{eq:EKF_steps:P1} \\
		&\tilde{P}_{t+1} = \nabla_{\zeta}F(\hat{\zeta}_t,u_t)\hat{P}_t\nabla_{\zeta}F(\hat{\zeta}_t,u_t)^{\top}+Q_{\mathrm{e}}, \label{eq:EKF_steps:P2} \\
		&K_{t+1}=\tilde{P}_{t+1}\tilde{C}^{\top}(\tilde{C}\tilde{P}_{t+1}\tilde{C}^{\top}+R_{\mathrm{e}})^{-1}, \label{eq:EKF_steps:P3}\\
		&\hat{\zeta}_{t+1}=\tilde{\zeta}_{t+1}+K_{t+1}(y_{t+1}-\tilde{C}\tilde{\zeta}_{t+1}), \label{eq:EKF_steps:P4}\\
		& \hat{P}_{t+1}=(\mathbf{I}_{n_x+n_{\theta}}-K_{t+1}\tilde{C})\tilde{P}_{t+1}. \label{eq:EKF_steps:P5}
	\end{align}
\end{subequations}
Alternative approaches, e.g., the Moving Horizon Estimator (MHE) \cite{Rao2003,Muller2023} can be used to model \eqref{eq:general_observer}. 
In this work, we assume that given a constraint set $\Theta \subseteq \R^{n_x+n_{\theta}}$, the estimator is \eqref{eq:general_observer} is designed to satisfy
\begin{align}
	\label{eq:constraint_theta}
	\hat{\zeta}_{t+1} \in \Theta.
\end{align}
While the MHE can explicitly handle such constraints, the EKF can be modified \cite{Simon2010,Ansari2017} to handle them by modifying the correction step in \eqref{eq:EKF_steps:P4} as
\begin{align}
	\label{eq:correction_qp_con}
	\hat{\zeta}_{t+1} = \scalemath{0.93}{ \arg\min_{\zeta \in \Theta} \|\zeta-\tilde{\zeta}_{t+1}\|_{\tilde{P}_{t+1}^{-1}}^2+\|y_{t+1}-\tilde{C}\zeta\|_{R_{\mathrm{e}}^{-1}}^2.}
\end{align}
In the next section, we formulate the constraint set $\Theta$ to guarantee stability of the proposed dual-control scheme.

\section{Dual MPC}
Towards developing a dual-control framework, we first formulate the MPC problem: 
\begin{align}
	&\min_{\mathbf{v},\vs} \sum_{k=0}^{N-1} \norm{\begin{bmatrix}z_k-\zs \\ v_k-\vs \end{bmatrix}}_Q^2+\norm{\begin{bmatrix}z_N - \zs  \\ v_N -\vs  \end{bmatrix} }_P^2 \label{eq:NMPC} \\ 
	& \hspace{120pt} + \ell(\vs,\yr) + \alpha a(\mathbf{v}) \nonumber \\
	& \ \text{s.t.} \ z_{k+1} = f(z_k,v_k,\hat{\theta}), \ \zs = f(\zs,\vs,\hat{\theta}), \nonumber \\
	& \hspace{15pt} \ Cz_k \in \mathbb{Y}, \ v_k \in \mathbb{U}, \ k \in \mathbb{I}_0^{N-1}, \ \nonumber\\
	& \hspace{15pt} \  C\zs \in \mathbb{Y}, \ \vs \in \mathbb{U}, \ z_0 = \hat{x}, \ z_N \in \mathcal{O}(\vs,\hat{\theta}), \nonumber
\end{align}
which depends on the state $\hat{x}$ and parameters $\hat{\theta}$ of \eqref{eq:model} estimated using \eqref{eq:general_observer}, and the output reference  $\yr \in \R^{n_y}$. The optimization vector is the finite sequence of moves $\mathbf{v}=(v_0,\cdots,v_N)$. 
Problem \eqref{eq:NMPC} is formulated using an \textit{artificial reference} approach \cite{Krupa2024}, in which the steady-state $\zs$ corresponding to an input $\vs$ is optimized online based on $\yr$ through $\ell(\vs,\yr)$, such as $\ell(\vs,\yr)=\|\yr-C\zs\|_2^2$. Furthermore, the function $a(\mathbf{v})$ is the \textit{active exploration} objective, designed to modify the input such that  exploration is promoted, e.g., as in \cite{Feng2018}. The factor $\alpha \geq 0$ weighs the active exploration objective against the tracking objective. Finally, the cost function matrices $Q,P \succeq 0$, along with the terminal set $\mathcal{O}(\vs,\hat{\theta})$ with terminal control input $v_N$ are designed to ensure recursive feasibility and stability of the closed-loop scheme formulated with $u_t=v_0^*(\hat{x}_t,\hat{\theta}_t,\yr_t)$, where $v_0^*(\hat{x}_t,\hat{\theta}_t,\yr_t)$ is the first input of the parametric optimizer of Problem \eqref{eq:NMPC}.
Unfortunately, solving Problem \eqref{eq:NMPC} entails the following challenges:
\begin{itemize}
	\item It requires the use of NLP solvers.
	\item Recursive feasibility and stability cannot be guaranteed if the parameter $\hat{\theta}$ is updated using \eqref{eq:general_observer}.
\end{itemize}
We now present a TMPC approximation of Problem \eqref{eq:NMPC} that addresses both these challenges. 

\begin{remark}
	\label{remark:constraints}
	Problem \eqref{eq:NMPC} is formulated using a \textit{certainty-equivalence} approach similar to \cite{Heirung2015}.
	While this nominal formulation does not explicitly account for the effect of unmodeled dynamics on output constraint satisfaction, it serves as a theoretical baseline. In practice, prediction uncertainties are typically addressed via constraint tightening~\cite{Yan2002, Messerer2023, Köhler2019, Farina2015}. Rather than removing output constraints to trivially guarantee recursive feasibility, we retain them in this nominal setting. This establishes the stability properties of the dual-control scheme, and provides the framework necessary for integration with constraint-tightening techniques.
\end{remark}

\subsection{Tube-based MPC}
We construct our TMPC scheme based on the following observations and modifications:

\textbf{Perturbed input:} If the initial model $\hat{\theta}_0$ is \textit{good enough}, a practical implementation of \eqref{eq:NMPC} would require a small active exploration constant $\alpha \geq 0$ in the objective. Then, under sufficient smoothness of the dynamics and cost functions, the optimal solution to Problem \eqref{eq:NMPC} is continuous with respect to the weighting parameter $\alpha$. Consequently, for small values of $\alpha$, the optimal dual-control solution lies in a bounded neighborhood of the pure-tracking solution, i.e., with $\alpha=0$.
Motivated by this intuition, we propose a two-stage strategy. Given a tuning parameter $\beta \in [0,1)$, we split the control input as $u=\uc+\up$ with $\uc \in (1-\beta) \mathbb{U}$ and $\up \in \beta \mathbb{U}$. In the first stage, we design a robust tracking controller to compute $\uc$ using the uncertain model
\begin{align}
	\label{eq:lpv_additive}
	x^+ \in A(p(x,u))x+B(p(x,u))\uc \oplus W,
\end{align}
where the additive disturbance $W$ is defined as
\begin{align}
	\label{eq:W_defn}
	w \in W:=\mathrm{CH}\left\{\beta B_i\mathbb{U}, i \in \mathbb{I}_1^{n_p}\right\}.
\end{align}
In the second stage, we utilize an active exploration criterion to compute the perturbation $\up \in \beta \mathbb{U}$, following which we apply the input $u=\uc+\up$ to the plant. Future study can analyze the suboptimality of this approach.

\textbf{Model encapsulation:} From parameterizations \eqref{eq:LP} and \eqref{eq:softmax}, we observe that for any $(x,u) \in \R^{n_x} \times \R^{n_u}$, the propagated state $x^+$ according to \eqref{eq:lpv_additive} satisfies
\begin{align}
	\label{eq:multiplicative}
	x^+ \in \mathrm{CH}\left\{A_ix + B_i \uc \oplus W, i \in \mathbb{I}_1^{n_p}\right\}
\end{align}
since $p(x,u) \in \Delta_{n_p}$. Then, a \textit{tube} constructed for the difference inclusion \eqref{eq:multiplicative} will encapsulate the set-valued trajectories of \eqref{eq:lpv_additive}. Formally, for any $M \in \mathbb{N}$, a tube is a sequence of sets $\{X_0,\cdots,X_M\}\subseteq \R^{n_x}$ satisfying
\begin{align}
	\label{eq:tube_defn}
	& \hspace{2pt} \forall x \in X_t, \ \exists \uc \in (1-\beta)\mathbb{U} : \\
	& \hspace{20pt}  A_i x+B_i\uc \oplus W \subseteq X_{t+1}, \ \  \forall \ i \in \mathbb{I}_1^{n_p}, \ t \in \mathbb{I}_0^{M-1}. \nonumber
\end{align}
This condition implies that if $x_0 \in X_0$, then there exists an input sequence $\{\uc_0, \cdots,\uc_{M-1}\} \in (1-\beta)\mathbb{U}$ such that the resulting state trajectory of \eqref{eq:lpv_additive} satisfies $x_t \in X_t$ for $t \in \{1,\cdots,M\}$ for any $\{w_0,\cdots,w_{M-1}\} \in W$.

TMPC schemes construct a feasible tube that converges to a \textit{robust control invariant} (RCI) set $\Xs \subseteq \R^{n_x}$ which satisfies the inclusion $C\Xs \subseteq \mathbb{Y}$ along with
\begin{align}
	\label{eq:RCI}
	& \hspace{-0pt} \forall x \in \Xs, \ \exists \uc \in (1-\beta)\mathbb{U} : \\
	& \hspace{25pt}  A_i x+B_i\uc \oplus W \subseteq \Xs, \ \  \forall \ i \in \mathbb{I}_1^{n_p}. \nonumber
\end{align}
\subsection{Configuration-constrained TMPC}
We develop our TMPC scheme using polytopes
\begin{align}
	X \leftarrow X(z,s):=z \oplus \left\{x \in \R^{n_x} \middle| Fx \leq s\right\},
\end{align}
where the matrix $F \in \R^{\fe \times n_x}$ is fixed a priori, and the parameters optimized online are the center $z \in \R^{n_x}$ and offset vector $s \in \R^{\fe}$. We also enforce nonnegative and configuration constraints
\cite{Villanueva2024} on $s$:
\begin{align}
	\label{eq:cc_constraint}
	\mathcal{E}:=\left\{s \in \R^{\fe} \middle|  s \geq 0, Es \leq 0\right\},
\end{align}
which ensures $0 \in \left\{x \in \R^{n_x} \middle| Fx \leq s\right\}$, and preserves the combinatorial structure of $X(z,s)$. Then, there exist matrices $V:=\{V_j \in \R^{n_x \times \fe},j \in \mathbb{I}_1^{\ve}\}$ such that for any $z \in \R^{n_x}$,
\begin{align}
	\label{eq:vertex}
	\scalemath{0.95}{s \in \mathcal{E} \Rightarrow X(z,s)=\mathrm{CH}\left\{z+V_j s, j \in \mathbb{I}_1^{\ve}\right\}.}
\end{align}
Thus, the set $\mathcal{E}$ thus enables a linear parameterization of the halfspace and vertex representations of $X(z,s)$.
\begin{proposition}
	\label{prop:RCI_result}
	The polytope $\Xs = X(\zs,s)$ satisfies the RCI condition in \eqref{eq:RCI} if there exist vectors $\vs \in \R^{n_u}$, $c \in \R^{\ve n_u}$ and $q \in\R^{\fe}$ verifying the inequalities%
	\begin{align}
		\label{eq:RCI_cc}
		&F(A_i(\zs+V_js)+B_i (\vs+U_jc))+d+q \leq s+F\zs,  \nonumber \\
		&C(\zs+V_js) \in \mathbb{Y}, \ \vs+U_jc \in (1-\beta) \mathbb{U}, \ q \geq 0, \nonumber \\
		& s \in \mathcal{E}, \ \forall (i,j) \in \mathbb{I}_1^{n_p} \times \mathbb{I}_1^{\ve},
	\end{align}
	where we define the disturbance vector
	\begin{align}
		\label{eq:d_defn}
		d := \max \{\beta|FB_i|\epsilon^u,i \in \mathbb{I}_1^{n_p}\}
	\end{align}
	and matrices $U_j:=e_j \otimes \mathbb{I}_{\ve} \in \R^{n_u \times \ve n_u}$.
\end{proposition}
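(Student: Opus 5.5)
The plan is to build a vertex control law and interpolate it convexly over $\Xs$. Write $x_j:=\zs+V_js$ for $j\in\mathbb{I}_1^{\ve}$. Since $s\in\mathcal{E}$, \eqref{eq:vertex} gives $\Xs=X(\zs,s)=\mathrm{CH}\{x_j,j\in\mathbb{I}_1^{\ve}\}$. At each vertex take the input $u_j:=\vs+U_jc$, that is, $\vs$ plus the $j$-th block of $c$. The second line of \eqref{eq:RCI_cc} gives $u_j\in(1-\beta)\mathbb{U}$. For an arbitrary $x\in\Xs$, write $x=\sum_j\lambda_jx_j$ with $\lambda\in\Delta_{\ve}$ and set $\uc:=\sum_j\lambda_ju_j$. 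Then $\uc\in(1-\beta)\mathbb{U}$ because this set is a convex box. The output constraint $C\Xs\subseteq\mathbb{Y}$ follows in the same way: $Cx_j\in\mathbb{Y}$ for every $j$, and $\mathbb{Y}$ is convex.

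Next I show the successor inclusion $A_ix+B_i\uc\oplus W\subseteq\Xs$ for each fixed $i$. The halfspace description is $\Xs=\{y\,|\,F(y-\zs)\le s\}$, so it suffices to bound $F(A_ix+B_i\uc+w)$ for all $w\in W$. By linearity, $F(A_ix+B_i\uc)=\sum_j\lambda_jF(A_ix_j+B_iu_j)$.

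The disturbance term needs a bound $Fw\le d$ for every $w\in W$. A point of $W$ has the form $w=\sum_k\mu_k\beta B_ku_k'$ with $\mu\in\Delta_{n_p}$ and $|u_k'|\le\epsilon^u$. Each row then satisfies
\[
(F\beta B_ku_k')_r\le\beta(|FB_k|\epsilon^u)_r\le d_r,
\]
where $d$ is the element-wise maximum in \eqref{eq:d_defn}. Convex combination preserves the bound, so $Fw\le d$.

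Combining the two pieces and using the first inequality of \eqref{eq:RCI_cc} for each $(i,j)$, together with $q\ge0$, gives
\[
F(A_ix+B_i\uc+w)\le\sum_j\lambda_j\bigl(F(A_ix_j+B_iu_j)+d\bigr)\le\sum_j\lambda_j(s+F\zs-q)\le s+F\zs.
\]
This is exactly $A_ix+B_i\uc+w\in X(\zs,s)$. It holds for all $i\in\mathbb{I}_1^{n_p}$, all $w\in W$, and all $x\in\Xs$, which is \eqref{eq:RCI}.

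The main obstacle is mostly bookkeeping rather than depth. The matrices $U_j$ must be read as selecting the $j$-th $n_u$-block of $c$; the written definition $e_j\otimes\mathbb{I}_{\ve}$ seems to intend $e_j^\top\otimes\mathbf{I}_{n_u}$. The vertex representation \eqref{eq:vertex} is the only nontrivial ingredient, and it is supplied by the configuration constraint. The disturbance bound relies on $\mathbb{U}$ being a symmetric box, so that $\max_{|u|\le\epsilon^u}(FB_ku)_r=(|FB_k|\epsilon^u)_r$.
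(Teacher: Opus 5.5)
Your proof is correct and takes essentially the paper's route. The paper just invokes Corollary 4 of \cite{Villanueva2024}, with $c$ stacking the vertex inputs and $q\ge 0$ as extra slack; you unpack that same vertex-interpolation argument, including the bound $Fw\le d$ on $W$ that the paper leaves implicit (in fact $d$ equals the support function of $W$ along the rows of $F$), and your reading of $U_j$ as the block selector $e_j^\top\otimes\mathbf{I}_{n_u}$ is the right correction of the stated definition.
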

\begin{proof}
	The result follows from \cite[Corollary 4]{Villanueva2024}, where $c=(c_1,\cdots,c_{\ve})$ stacks the vertex control inputs, and \eqref{eq:RCI_cc} contains an additional disturbance vector $q \geq 0$.
\end{proof}

Given model parameters $\hat{\theta}$ estimated by \eqref{eq:general_observer} and output reference $\yr \in \R^{n_y}$, the optimal RCI set is defined as $X(\zs_{\mathrm{o}}(\hat{\theta},\yr),s_{\mathrm{o}}(\hat{\theta},\yr))$, where $(\zs_{\mathrm{o}},s_{\mathrm{o}})$ are part of the optimizers of the problem
\begin{align}
	\label{eq:optimal_rci_problem}
	r(\hat{\theta},\yr):=\min_{\xr} \ell(\xr,\yr) \ \text{s.t.} \ \ \eqref{eq:RCI_cc}
\end{align}
defined over $\xr=(\zs,\vs,s,c,q) \in \R^{n_x+n_u+2m+\ve n_u}$. The optimal RCI cost is modeled, for example, as
\begin{align}
	\label{eq:ell_formulation}
	\ell(\xr,\yr)=\sum_{j=1}^{\ve}\|\yr-C\zs\|_{Q_1}^2+\norm{\xr}_{Q_2}^2,
\end{align}
with $Q_1,Q_2 \succ 0$, such that the center of the optimal RCI set projected onto the output space tracks the reference. Our goal is to formulate a TMPC scheme which computes a tube $\{X(z_t,s),t \in \mathbb{N}\}$ that converges to $X(\zs_{\mathrm{o}}(\hat{\theta},\yr),s_{\mathrm{o}}(\hat{\theta},\yr))$.

\begin{proposition}
	\label{prop:tube_result}
	Suppose the vector $\xr$ satisfies \eqref{eq:RCI_cc}. Then, the sets $\scalemath{0.95}{\{X(z_0,s),\cdots,X(z_M,s)\}}$ satisfy the tube condition \eqref{eq:tube_defn} if there exist vectors $\{v_0,\cdots,v_{M-1}\} \in \R^{n_u}$ satisfying
	\begin{align}
		\label{eq:tube_cc}
		&F(A_i(z_t-\zs)+B_i(v_t-\vs))\leq q+F(z_{t+1}-\zs), \nonumber \\
		&C(z_t+V_js) \in \mathbb{Y}, \ v_t+U_jc \in (1-\beta)\mathbb{U}, \nonumber \\
		& \forall (i,j,t) \in \mathbb{I}_1^{n_p} \times \mathbb{I}_1^{\ve} \times \mathbb{I}_0^{M-1}.
	\end{align}
\end{proposition}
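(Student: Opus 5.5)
The strategy is to certify the tube condition \eqref{eq:tube_defn} vertex by vertex, and then use convexity to cover every point of each set. For each $t$ I will build a vertex control law from the RCI data. At vertex $z_t+V_js$ of $X(z_t,s)$ the input is $\uc_{t,j}:=v_t+U_jc$; it lies in $(1-\beta)\mathbb{U}$ by the second line of \eqref{eq:tube_cc}. For a general $x\in X(z_t,s)$, \eqref{eq:vertex} (valid since $s\in\mathcal{E}$ from \eqref{eq:RCI_cc}) gives $x=\sum_j\lambda_j(z_t+V_js)$ with $\lambda\in\Delta_{\ve}$. I then take $\uc=\sum_j\lambda_j\uc_{t,j}$, which stays in $(1-\beta)\mathbb{U}$ because that set is convex. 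Since $A_ix+B_i\uc$ is affine in $(x,\uc)$, we get $A_ix+B_i\uc\oplus W\subseteq \mathrm{CH}\{A_i(z_t+V_js)+B_i\uc_{t,j}\oplus W\}$. The target set $X(z_{t+1},s)$ is convex, so it suffices to prove the inclusion at each vertex pair $(i,j)$.

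Next I show that containment in $X(z_{t+1},s)$ reduces to the halfspace inequality $F(y+w-z_{t+1})\le s$ for $y=A_i(z_t+V_js)+B_i\uc_{t,j}$ and all $w\in W$. Bounding the disturbance uses \eqref{eq:W_defn}. Any $w\in W$ is a convex combination of terms $\beta B_k u_k$ with $|u_k|\le\epsilon^u$. Each row then satisfies $Fw\le \max_k \beta|FB_k|\epsilon^u=d$, by \eqref{eq:d_defn} and convexity of the row-wise maximum. It therefore remains to show $Fy+d\le s+Fz_{t+1}$.

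The key step is a decomposition around the steady state. I write $y=\big[A_i(\zs+V_js)+B_i(\vs+U_jc)\big]+\big[A_i(z_t-\zs)+B_i(v_t-\vs)\big]$. Applying the first inequality of \eqref{eq:RCI_cc} to the first bracket gives $F(\cdot)\le s+F\zs-d-q$. Applying the first line of \eqref{eq:tube_cc} to the second bracket gives $F(\cdot)\le q+F(z_{t+1}-\zs)$. Summing the two yields $Fy\le s+Fz_{t+1}-d$, exactly as needed; the $q$ terms cancel, which is why $q$ was introduced. Output constraints hold because $C(z_t+V_js)\in\mathbb{Y}$ at every vertex and $\mathbb{Y}$ is convex, so $CX(z_t,s)\subseteq\mathbb{Y}$ whenever required.

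I expect the main obstacle to be the convexity step for the input. The interpolated control must still make every vertex image land in the set. This works only because the inclusion is checked for all $i$ simultaneously and $W$ is the same for each vertex. One also has to be careful that the convex combination of the sets $A_iy_j+B_i\uc_j\oplus W$ (for fixed $i$) equals $A_ix+B_i\uc\oplus W$, which requires $W$ to be convex; this holds by definition as a convex hull. The $d$ bound needs attention because $W$ is a convex hull rather than a union, but the row-wise maximum handles that.
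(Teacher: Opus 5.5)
Your proof is correct and follows the paper's argument. The paper also splits the vertex successor around $(\zs,\vs)$ and adds the first inequality of \eqref{eq:RCI_cc} to the first line of \eqref{eq:tube_cc}, so that $q$ cancels and \eqref{eq:tube_to_hold} follows; the only difference is that you derive \eqref{eq:tube_to_hold} as a sufficient condition yourself (vertex interpolation of the input plus the row-wise bound $Fw\le d$), whereas the paper cites it from \cite[Corollary 4]{Villanueva2024}.
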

\begin{proof}
	From \cite[Corollary 4]{Villanueva2024}, the sets $X(z_t,s)$ and $X(z_{t+1},s)$ are part of an invariant tube if
	\begin{align}
		\label{eq:tube_to_hold}
		\hspace{-5pt} \scalemath{0.98}{F(A_i(z_t+V_js)+B_i (v_t+U_jc))+d \leq s+Fz_{t+1}}
	\end{align}
	holds with some $v_t \in \R^{n_u}$ for all $(i,j) \in \mathbb{I}_1^{n_p} \times \mathbb{I}_1^{\ve}$, along with the input and output constraints. From \eqref{eq:RCI_cc}, the inequality $\scalemath{0.95}{
		F(A_iV_js+B_iU_jc) \leq F\zs+s-(F(A_i\zs+B_i\vs)+d+q)}$
	holds. Then, adding $F(A_iz_t+B_iv_t)+d$ to both sides of this inequality, we see that the left-hand-side of \eqref{eq:tube_to_hold} is $\leq F\zs+s+F(A_i(z_t-\zs)+B_i(v_t-\vs))-q$. The proof concludes by enforcing this bound to be $\leq s+Fz_{t+1}$.
\end{proof}

Following Propositions \ref{prop:RCI_result} and \ref{prop:tube_result}, we formulate our TMPC controller based on the QP
\begin{align}
	&\hspace{-3pt} \min_{\mathbf{x}} \scalemath{0.88}{\sum_{k=0}^{N-1} \norm{\begin{bmatrix}z_k-\zs \\ v_k-\vs \end{bmatrix}}_Q^2+\norm{\begin{bmatrix}z_N - \zs  \\ v_N -\vs  \end{bmatrix} }_P^2} + \ell(\mathbf{x}^{\mathrm{r}},\yr)\label{eq:TMPC} \\
	& \ \text{s.t.} \ \scalemath{0.95}{F(A_i(z_k-\zs)+B_i(v_k-\vs))\leq q+F(z_{k+1}-\zs),} \nonumber \\
	& \hspace{15pt} \ \scalemath{0.95}{C(z_k+V_js) \in \mathbb{Y}, \ v_k+U_jc \in (1-\beta)\mathbb{U}, } \nonumber \\
	& \hspace{15pt} \ \scalemath{0.95}{F(A_i(z_N-\zs)+B_i(v_N-\vs))\leq q+\gamma F(z_N-\zs),} \nonumber \\
	& \hspace{15pt} \ \scalemath{0.95}{C(z_N+V_js) \in \mathbb{Y}, \ v_N+U_jc \in (1-\beta)\mathbb{U}, } \nonumber \\
	&  \hspace{15pt} \ F\hat{x} \leq q+Fz_0, \ \eqref{eq:RCI_cc}, \ (i,j,k) \in \mathbb{I}_1^{n_p} \times \mathbb{I}_1^{\ve} \times \mathbb{I}_0^{N-1}, \nonumber
\end{align}
where $\gamma \in (0,1)$ is some user-specified constant. The QP is parametric in the state $\hat{x}$ and parameter $\hat{\theta}$ of \eqref{eq:model} estimated using \eqref{eq:general_observer}, along with reference $\yr$. The optimization vector is $\mathbf{x}=(z_0,v_0,\cdots,z_N,v_N,\xr)$.
Denoting the parametric optimizer of \eqref{eq:TMPC} as $\mathbf{x}^*(\hat{x},\hat{\theta},\yr)$, the closed-loop scheme is defined as%
\begin{subequations}
	\label{eq:closed_loop}
	\begin{align}
		&u_t =v^*_0(\hat{x}_t,\hat{\theta}_t,\yr_t)+\sum_{j=1}^{\ve}  \lambda_j c_j^*(\hat{x}_t,\hat{\theta}_t,\yr_t)+\up_t, \label{eq:closed_loop:1} \\
		&(\hat{x}_{t+1},\hat{\theta}_{t+1}) \ \text{updated using \eqref{eq:general_observer} subject to \eqref{eq:constraint_theta}}, \label{eq:closed_loop:2}
	\end{align}
\end{subequations}
for some $\up_t \in \beta \mathbb{U}$ computed using an active exploration criterion, where $\lambda \in \R^{\ve}$ is computed as
\begin{align}
	\label{eq:lambda_problem}
	\min_{\lambda \in \Delta_{\ve}} \ \|\lambda\|_2^2 \ \  \text{s.t.} \ \ \hat{x}=z_0^*+\sum_{j=1}^{\ve} \lambda_j V_js^*.
\end{align}
\subsection{Recursive feasibility and stability}
To guarantee recursive feasibility when the state and parameters are updated via \eqref{eq:general_observer}, we restrict the estimator's update step in \eqref{eq:constraint_theta}. To this end, we define a parameterized polytope $\Theta(\mathbf{x}, d)$ to formulate \eqref{eq:constraint_theta} as
\begin{align*}
	\Theta(\mathbf{x},d):=\left\{\begin{pmatrix} x \\ \theta \end{pmatrix} \middle|
	\scalemath{0.92}{
		\begin{matrix*}[l]
			Fx \leq q+Fz_1, \ \beta|FB_i|\epsilon^u \leq d,  \\
			F(A_i(z_k-\zs)+B_i(v_k-\vs)),  \\
			\hspace{60pt} \leq q+F(z_{k+1}-\zs), \\
			F(A_i(z_N-\zs)+B_i(v_N-\vs) \\
			\hspace{60pt}  \leq q+ F(z^+-\zs), \\
			F(A_i(z^+-\zs)+B_i(v^+-\vs) \\ 
			\hspace{60pt}  \leq q+ \gamma F(z^+-\zs), \\
			F(A_i(\zs+V_js)+B_i (\vs+U_jc)) \\
			\hspace{60pt}  \leq s+F\zs-(d+q), \\
			\forall \ (i,j,k) \in \mathbb{I}_1^{n_p} \times \mathbb{I}_1^{\ve} \times \mathbb{I}_1^{N-1},
	\end{matrix*}}
	\right\}
\end{align*}
where $z^+$, $v^+$ are the shifted variables
\begin{align}
	\label{eq:feasible_seq}
	z^+=\zs+\gamma(z_N-\zs), \ v^+=\vs+\gamma(v_N-\vs).
\end{align}
\begin{lemma}
	\label{lemma:recursive_feasibility}
	Suppose Problem \eqref{eq:TMPC} is feasible with initial state-parameter estimate $(\hat{x}_0,\hat{\theta}_0)$. Then, \eqref{eq:TMPC} is feasible for all $t \in \mathbb{N}$ with $(\hat{x}_t,\hat{\theta}_t)$ generated by the closed-loop scheme in \eqref{eq:closed_loop} for any reference sequence $\{\yr_t,t \in \mathbb{N}\}$ and perturbation sequence $\{\up_t \in \beta \mathbb{U},t \in \mathbb{N}\}$ when \eqref{eq:constraint_theta} is defined as
	\begin{align}
		\label{eq:EKF_constraint}
		(\hat{x}_{t+1},\hat{\theta}_{t+1}) \in \Theta(\mathbf{x}^*(\hat{x}_t,\hat{\theta}_t,\yr_t),d(\hat{\theta}_t))
	\end{align}
	with the vector $d(\hat{\theta}_t)$ is computed as in \eqref{eq:d_defn}.
\end{lemma}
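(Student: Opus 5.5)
By induction on $t$, the plan is to show that whenever \eqref{eq:TMPC} is feasible at time $t$ with optimizer $\mathbf{x}^*_t:=\mathbf{x}^*(\hat{x}_t,\hat{\theta}_t,\yr_t)$, the shifted candidate is feasible at time $t+1$. The candidate is
\begin{align*}
\tilde{\mathbf{x}}:=(z_1^*,v_1^*,\cdots,z_N^*,v_N^*,z^+,v^+,\xr{}^*),
\end{align*}
with $(z^+,v^+)$ built from $\mathbf{x}^*_t$ as in \eqref{eq:feasible_seq}. Here $\xr{}^*=(\zs{}^*,\vs{}^*,s^*,c^*,q^*)$ is kept unchanged. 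The reference $\yr_{t+1}$ enters only the cost $\ell$, so feasibility does not depend on the reference sequence. The perturbation $\up_t$ also never appears in the constraints, since the plant and model state enter \eqref{eq:TMPC} only through the estimate $\hat{x}_{t+1}$. So the whole argument reduces to checking the constraints of \eqref{eq:TMPC} at the new data $(\hat{x}_{t+1},\hat{\theta}_{t+1})$. The key point is that the new data also changes the matrices $\{A_i,B_i\}$, and hence $d$. Every constraint of \eqref{eq:TMPC} that involves $\hat{\theta}$ must therefore be re-verified with the \emph{new} matrices. Enforcing these constraints on the estimate is exactly the role of \eqref{eq:EKF_constraint}.

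I would go through the constraint blocks of \eqref{eq:TMPC} in order and match each one to a row of $\Theta(\mathbf{x}^*_t,d(\hat{\theta}_t))$.

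\textbf{(i) Initial constraint.} $F\hat{x}_{t+1}\le q^*+Fz_1^*$ is the first row of $\Theta$, and this is precisely the constraint $F\hat{x}\le q+F\tilde z_0$ with $\tilde z_0=z_1^*$.

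\textbf{(ii) Dynamics constraints.} The constraints for $k=0,\dots,N-2$ of the shifted sequence are the rows of $\Theta$ for $k\in\mathbb{I}_1^{N-1}$, evaluated with the new $(A_i,B_i)$. The stage $k=N-1$, which links $z_N^*$ to $z^+$, is the row involving $z^+$. The terminal constraint with $\gamma$, evaluated at $(z^+,v^+)$, is the next row.

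\textbf{(iii) RCI constraint \eqref{eq:RCI_cc}.} The row $F(A_i(\zs+V_js)+B_i(\vs+U_jc))\le s+F\zs-(d+q)$ holds with the old $d=d(\hat{\theta}_t)$. The row $\beta|FB_i|\epsilon^u\le d$ gives $d(\hat{\theta}_{t+1})\le d(\hat{\theta}_t)$ by \eqref{eq:d_defn}. Together these yield \eqref{eq:RCI_cc} with the new disturbance vector. The constraints $q\ge0$ and $s\in\mathcal{E}$ are unchanged.

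\textbf{(iv) Model-independent constraints.} The remaining constraints are the output and input inclusions $C(\cdot+V_js)\in\mathbb{Y}$ and $\cdot+U_jc\in(1-\beta)\mathbb{U}$. They do not depend on $\hat\theta$ (since $C$ is fixed), so for the shifted entries $z_1^*,\dots,z_N^*$ they are inherited from time $t$. For the new terminal pair $(z^+,v^+)$, note that $z^+$ is a convex combination of $z_N^*$ and $\zs{}^*$, and $v^+$ likewise. Both endpoints satisfy these convex inclusions: the endpoint $z_N^*$ by the terminal constraints of \eqref{eq:TMPC}, and the endpoint $\zs{}^*$ by \eqref{eq:RCI_cc}. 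Since $\mathbb{Y}$ and $\mathbb{U}$ are convex, $(z^+,v^+)$ satisfies them too.

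The main obstacle is not any single inequality but the bookkeeping that the estimator constraint is actually satisfiable at each step. The lemma presupposes that \eqref{eq:general_observer} returns a point of $\Theta$, and I would need to show this set is nonempty. My first attempt is to check that the prior $(z_1^*,\hat{\theta}_t)$ lies in $\Theta$, since then a constrained estimator such as \eqref{eq:correction_qp_con} or an MHE is well defined. Verifying this uses two facts. The first is $F z_1^*\le q^*+Fz_1^*$, which holds since $q^*\ge0$. The second is that, for the old matrices, the stage-$(N-1)$ row follows from the terminal constraint: with $\gamma(z_N^*-\zs{}^*)=z^+-\zs{}^*$, the constraint at time $t$ is exactly $F(A_i(z_N^*-\zs)+B_i(v_N^*-\vs))\le q+F(z^+-\zs)$. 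The row for $(z^+,v^+)$ follows by linearity, because $z^+-\zs=\gamma(z_N-\zs)$ scales the terminal inequality by $\gamma$ and $\gamma q\le q$. Once nonemptiness is settled, the induction on $t$ closes immediately.
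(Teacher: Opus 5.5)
Your argument is correct, but it is organized differently from the paper's proof.

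\textbf{The paper's route.} The paper first treats the nominal case $\hat\theta_{t+1}=\hat\theta_t$, $\hat x_{t+1}=f(\hat x_t,u_t,\hat\theta_t)$. It takes the tube inclusion \eqref{eq:tube_to_hold} of Proposition~\ref{prop:tube_result} and averages it with two sets of weights: $p_{t,i}$, which lie in the simplex because of the softmax, and $\lambda_j$ from \eqref{eq:lambda_problem}. This shows that the model prediction lies in $X(z_1^*,s^*)$ for every $u^{\mathrm{p}}_t\in\beta\mathbb{U}$. Only in its final sentence does it observe that $\Theta$ is built from the shifted candidate, so any point of $\Theta$ works.

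\textbf{Your route.} You go straight to that final step. You never need the qLPV/softmax structure or Proposition~\ref{prop:tube_result}, because $(\hat x_{t+1},\hat\theta_{t+1})$ enters \eqref{eq:TMPC} only through the rows of $\Theta$. In doing so you make explicit two things the paper leaves implicit. First, you pass from $d(\hat\theta_t)$ to $d(\hat\theta_{t+1})$ via the row $\beta|FB_i|\epsilon^u\le d$. Second, you verify the input/output constraints at the new terminal pair by convexity. That convexity step needs the interpolation \eqref{eq:feasible_seq}, not the extrapolation $z_N^*+\gamma(z_N^*-z^{\mathrm{s}*})$ printed in \eqref{eq:candidates:z}.

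\textbf{What each buys.} The paper's nominal step gives a more meaningful witness that $\Theta$ is nonempty. It shows that the unprojected model prediction with frozen parameters is admissible for every perturbation; this is the EKF prior \eqref{eq:EKF_steps:P1}, or the ``No adapt'' loop of the example. That is where the tube genuinely absorbs $u^{\mathrm{p}}_t$. In your version the ``any perturbation'' clause holds trivially. Your witness $(z_1^*,\hat\theta_t)$ is purely algebraic, but it only needs $q^*\ge 0$. It is therefore unaffected by an inconsistency in the paper: the nominal step yields $F\hat x_{t+1}\le s^*+Fz_1^*$, while \eqref{eq:TMPC} and $\Theta$ are written with $q$ in the initial constraint.
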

\begin{proof}
	We split the proof into two parts. First, we show that recursive feasibility holds if $\hat{\theta}_{t+1}=\hat{\theta}_t$, i.e., the parameters are held constant, and the state is updated as $\hat{x}_{t+1}=f(\hat{x}_t,u_t,\hat{\theta}_t)$. Then, we extend the proof when they are updated following \eqref{eq:closed_loop:2}. We denote the optimizer $\mathbf{x}^*(\hat{x}_t,\hat{\theta}_t,\yr_t)$ of Problem \eqref{eq:TMPC} at time $t$ as $\mathbf{x}^*=(z^*_0,v^*_0,\cdots,z^*_N,v^*_N,\mathbf{x}^{\mathrm{r}*})$.
	
	Suppose \eqref{eq:TMPC} is feasible at time $t \in \mathbb{N}$ and $u_t \in \mathbb{U}$ from \eqref{eq:closed_loop:1} is applied with some $\up_t \in \beta \mathbb{U}$. From \eqref{eq:tube_to_hold} in Proposition \ref{prop:tube_result}, we know that for all $i \in \mathbb{I}_1^{n_p}$, the inequality $F(A_i(z_0^*+V_js^*)+B_i (v_0^*+U_jc^*+\up_t)) \leq s^*+Fz_1^*$
	holds for any $\up_t \in \beta \mathbb{U}$. Denoting $p_t = p(\hat{x}_t,u_t)$, \eqref{eq:softmax} implies $p_t \in \Delta_{n_p}$. Hence, multiplying both sides of the inequality with $p_{t,i}$ and components $\lambda_j$ computed as in \eqref{eq:lambda_problem} and summing, $A(p_t)\hat{x}_t+B(p_t)u_t \in X(z_1^*,s^*)$ follows. Hence, the constraint $F\hat{x}\leq s+Fz_0$ in \eqref{eq:TMPC} is feasible with $\hat{x}=A(p_t)\hat{x}_t+B(p_t)u_t $ and $z_0 = z^*_1$. Furthermore, the RCI constraints in \eqref{eq:RCI_cc} are feasible with $\xr=\mathbf{x}^{\mathrm{r}*}$ and $\hat{\theta}_{t+1}=\hat{\theta}_t$ for any $\yr_{t+1}$, since $\yr$ only enters through the objective. Finally, we consider the candidate sequence%
	\begin{subequations}
		\label{eq:candidates}
		\begin{align} 
			z_k &= z_{k+1}^*, \ v_k = v_{k+1}^*, \ k \in \mathbb{I}_0^{N-1},  \label{eq:candidates:zv} \\ 
			z_N &= z_N^*+\gamma(z_N^*-z^{\mathrm{s}*}), \ v_N = v_N^*+\gamma(v_N^*-v^{\mathrm{s}*}) \label{eq:candidates:z} 
		\end{align}
	\end{subequations}
	at time $t+1$. Clearly, \eqref{eq:candidates:zv} satisfies the constraints for $k \in \mathbb{I}_0^{N-2}.$ For $k=N$, we know that the inequality
	\begin{align}
		\label{eq:terminal_inequality}
		\scalemath{0.95}{F(A_i(z_N^*-z^{\mathrm{s}*})+B_i(v_N^*-v^{\mathrm{s}*})\leq q^*+\gamma F(z_N^*-z^{\mathrm{s}*})}
	\end{align}
	holds since \eqref{eq:TMPC} is feasible at time $t$. Hence, the inequality $\scalemath{0.95}{F(A_i(z_{N-1}-z^{\mathrm{s}})+B_i(v_{N-1}-v^{\mathrm{s}})\leq q+F(z_N-z^{\mathrm{s}})}$
	holds at time $t+1$ with $k=N-1$ from \eqref{eq:candidates:z} along with $(z_{N-1},v_{N-1},\zs,\vs,q)=(z_N^*,v_N^*,z^{\mathrm{s}*},v^{\mathrm{s}*},q^*)$. Finally, multiplying both sides of \eqref{eq:terminal_inequality} with $\gamma \in (0,1)$, and noting that $\gamma q^*\leq q^*$ since $q^* \geq 0$, the inequality $F(A_i(z_N-z^{\mathrm{s}})+B_i(v_N-v^{\mathrm{s}})\leq q+\gamma  F(z_N-z^{\mathrm{s}})$ follows.
	Thus, \eqref{eq:TMPC} is recursively feasible with $\hat{x}_{t+1}=f(\hat{x}_t,u_t,\hat{\theta}_t)$ and $\hat{\theta}_{t+1}=\hat{\theta}_t$. 
	To conclude the proof when $\hat{x}_{t+1}$ and $\hat{\theta}_{t+1}$ are updated as per \eqref{eq:closed_loop:2}, note that the subsequent state and parameters satisfy \eqref{eq:EKF_constraint}, with the set $\Theta(\mathbf{x}^*,d(\hat{\theta}_t))$ defined using the feasible sequence in \eqref{eq:candidates}. Hence, for any $(\hat{x}_{t+1},\hat{\theta}_{t+1}) \in \Theta(\mathbf{x}^*,d(\hat{\theta}_t))$, \eqref{eq:candidates} is a feasible sequence at time $t+1$.
\end{proof}

In the following result on stability of the closed-loop scheme, we assume a constant reference $\yr_t\equiv\yr$, $\forall t \in \mathbb{N}$, and hence drop its time dependence.
\begin{theorem}
	\label{thm:stability}
	Suppose Problem \eqref{eq:TMPC} is feasible with initial state and parameter estimate $(\hat{x}_0,\hat{\theta}_0)$, and \eqref{eq:constraint_theta} is such that \eqref{eq:EKF_constraint} holds. Suppose further that matrices $Q,P \succ 0$ are chosen such that $P \succeq (1-\gamma^2)^{-1}Q$, and denote the optimal value of \eqref{eq:TMPC} as $\mathcal{C}(\hat{x},\hat{\theta})$. Then, $\mathcal{L}(\hat{x},\hat{\theta}):=\mathcal{C}(\hat{x},\hat{\theta})-r(\hat{\theta})$ serves as an ISS-Lyapunov function for \eqref{eq:closed_loop} with respect to the optimal RCI set against $\|\hat{\theta}_{t+1}-\hat{\theta}_t\|$ irrespective of the perturbation sequence $\{\up_t \in \beta \mathbb{U},t \in \mathbb{N}\}$ in \eqref{eq:closed_loop:1}, where $r(\hat{\theta})$ is the optimal RCI cost from \eqref{eq:optimal_rci_problem}.
\end{theorem}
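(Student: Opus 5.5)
The plan is to follow the standard artificial-reference MPC stability argument (as in \cite{Krupa2024,Limon2006}). We establish three ISS-Lyapunov properties for $\mathcal{L}$ with respect to the optimal RCI set $X(\zs_{\mathrm{o}}(\hat{\theta}),s_{\mathrm{o}}(\hat{\theta}))$: a lower bound, an upper bound, and a decrease inequality perturbed by $\|\hat{\theta}_{t+1}-\hat{\theta}_t\|$.

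\textbf{Bounds.} We first note $\mathcal{L}\geq 0$. Any feasible point of \eqref{eq:TMPC} contains a feasible $\xr$ for \eqref{eq:optimal_rci_problem}, so $\mathcal{C}(\hat{x},\hat{\theta})\geq r(\hat{\theta})$. For the lower bound we use $Q\succ 0$ and the constraint $F\hat{x}\leq q+Fz_0$. These give $\mathcal{L}\geq \|z_0^*-\zs\|_Q^2 + \ell(\xr^*)-r(\hat\theta)$. Because $\ell$ is strongly convex in $\xr$ (since $Q_2\succ0$) over the convex feasible set \eqref{eq:RCI_cc}, the excess $\ell(\xr^*)-r(\hat\theta)$ bounds $\|\xr^*-\xr_{\mathrm{o}}\|^2$ from below. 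Together these control the distance of $\hat{x}$ to $X(\zs_{\mathrm{o}},s_{\mathrm{o}})$ by a $\mathcal{K}_\infty$ function. For the upper bound, when $\hat{x}$ lies in the optimal RCI set we choose $z_k=\zs_{\mathrm{o}}$, $v_k=\vs_{\mathrm{o}}$, $\xr=\xr_{\mathrm{o}}$. This gives $\mathcal{L}=0$. Continuity or piecewise quadraticity of the parametric QP value on its feasible region then yields a $\mathcal{K}_\infty$ upper bound locally, which is standard.

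\textbf{Decrease for frozen parameters.} We reuse the shifted candidate \eqref{eq:candidates} from Lemma~\ref{lemma:recursive_feasibility}, keeping $\xr=\xr^*$. Its stage cost drops the $k=0$ term. The new tail term satisfies
$\|(z_N-\zs,v_N-\vs)\|_P^2=\gamma^2\|(z_N^*-\zs^*,v_N^*-\vs^*)\|_P^2$ (using the shift \eqref{eq:feasible_seq}), so the terminal contribution changes by
$\|\cdot\|_Q^2+(\gamma^2-1)\|\cdot\|_P^2\leq 0$, by $P\succeq(1-\gamma^2)^{-1}Q$. Hence $\mathcal{C}(\hat{x}_{t+1},\hat\theta_t)-\mathcal{C}(\hat{x}_t,\hat\theta_t)\leq -\|(z_0^*-\zs^*,v_0^*-\vs^*)\|_Q^2$, independently of $\up_t$, since the set inclusion in Lemma~\ref{lemma:recursive_feasibility} holds for all $\up_t\in\beta\mathbb{U}$.

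\textbf{Main obstacle.} This decrease only drives the tube center to the \emph{current} artificial reference $\xr^*$, not to $\xr_{\mathrm{o}}$. To close the gap we would use the usual argument: if $z_0^*=\zs^*$ but $\xr^*\neq\xr_{\mathrm{o}}$, then a convex combination $\xr^*+\mu(\xr_{\mathrm{o}}-\xr^*)$ with small $\mu$ is feasible. Feasibility follows from convexity of \eqref{eq:RCI_cc} and from the contraction margin $\gamma<1$ together with $q\geq0$ in the terminal constraint. This combination strictly lowers $\ell$ by an amount quadratic in $\|\xr^*-\xr_{\mathrm{o}}\|$ while increasing the tracking cost only by $O(\mu^2)$. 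It contradicts optimality and gives a decrease $-\alpha_3(\mathrm{dist})$. I expect this to be the delicate step, because of the polytopic configuration constraints.

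\textbf{Parameter update.} We write
$\mathcal{L}(\hat x_{t+1},\hat\theta_{t+1})-\mathcal{L}(\hat x_{t+1},\hat\theta_t)$. By \eqref{eq:EKF_constraint} the same candidate stays feasible at $\hat\theta_{t+1}$, so the first term $\mathcal{C}$ changes only through its dependence on the data $(A_i,B_i)$, and the change in $r$ is bounded similarly. Lipschitz continuity of the QP value functions in $\hat\theta$ on the compact set of feasible parameters gives a bound $\sigma(\|\hat\theta_{t+1}-\hat\theta_t\|)$ with $\sigma\in\mathcal{K}$. Combining the three steps gives the ISS-Lyapunov inequality.
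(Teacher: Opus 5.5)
Your frozen-parameter step is exactly the paper's argument. It keeps the RCI variables at $\mathbf{x}^{\mathrm{r}*}$ and uses the terminal contraction $z_N-z^{\mathrm{s}*}=\gamma(z_N^*-z^{\mathrm{s}*})$ from \eqref{eq:feasible_seq}, which is what the $\gamma^2$ in the bound needs, rather than the literal \eqref{eq:candidates:z}. It then uses $P\succeq(1-\gamma^2)^{-1}Q$ to get $\|m_N\|^2_{Q+\gamma^2P-P}\le 0$, yielding $-\|m_0\|_Q^2$ for every $\up_t\in\beta\mathbb{U}$. Your strong-convexity lower bound and your upper bound go beyond the paper, which only remarks that $\mathcal{L}>0$ off the optimal RCI set.

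You take an unnecessary detour in the parameter update. The paper never evaluates $\mathcal{L}(\hat x_{t+1},\hat\theta_t)$. No term of the objective of \eqref{eq:TMPC} depends on $\hat\theta$, and $\Theta$ in \eqref{eq:EKF_constraint} is built so that the candidate is feasible at $(\hat x_{t+1},\hat\theta_{t+1})$. Hence $\mathcal{C}(\hat x_{t+1},\hat\theta_{t+1})\le\mathcal{C}(\hat x_t,\hat\theta_t)-\|m_0\|_Q^2$ holds directly, and only $r(\hat\theta)$ must be Lipschitz. Your split also requires Lipschitz continuity of $\mathcal{C}(\hat x,\cdot)$. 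That is a parametric QP in which $\hat\theta$ multiplies decision variables and moves the feasible set, so you would be adding a delicate extra claim that you do not need once you have noted that the candidate stays feasible.

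The genuine gap is your ``main obstacle'' step, which the paper does not attempt. Its proof stops at $\mathcal{L}(\hat x_{t+1},\hat\theta_{t+1})-\mathcal{L}(\hat x_t,\hat\theta_t)\le-\|m_0\|_Q^2+L_r\|\hat\theta_{t+1}-\hat\theta_t\|$, a decrease measured from the current artificial reference. Your fix does not follow from what you cite. Convexity of \eqref{eq:RCI_cc} only makes $\mathbf{x}^{\mathrm{r}}_\mu$ an admissible RCI parameter. It gives no tube from the current $\hat x$ into $X(z^{\mathrm{s}}_\mu,s_\mu)$. Nor do $\gamma<1$ and $q\ge0$, because $q$ is the only slack that lets the centers move, and it must absorb the center dynamics for all $i$ at once.

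For example, take $F=[\mathbf{I}_{n_x}\ -\mathbf{I}_{n_x}]^\top$ and $q=0$. The terminal constraint then reads $(A_i-\gamma\mathbf{I}_{n_x})(z_N-\zs)+B_i(v_N-\vs)=0$ for all $i$: $n_pn_x$ equations in $n_x+n_u$ unknowns, six in three in the paper's example. Generically this forces $z_N=\zs$, and backward propagation pins every center to $\zs$. This is the typical case, not an edge case. In \eqref{eq:optimal_rci_problem}, $q$ only tightens \eqref{eq:RCI_cc} while being penalized by $\ell$, so with the block-diagonal $Q_2$ of the example $q_{\mathrm{o}}=0$.

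The convexity argument that does work combines entire QP solutions. Take a point where the optimal tube is stationary ($m_k=0$ for all $k$), and let $\mathbf{x}'$ be feasible for \eqref{eq:TMPC} at the same $\hat x$ with $\ell(\mathbf{x}^{\mathrm{r}\prime})<\ell(\mathbf{x}^{\mathrm{r}*})$. Then $(1-\mu)\mathbf{x}^*+\mu\mathbf{x}'$ has tracking cost $\mu^2$ times that of $\mathbf{x}'$, while the $\ell$-term drops linearly in $\mu$. However, the existence of such an $\mathbf{x}'$ is a reachability assumption that the theorem does not make. Without it, the artificial reference can stall at $\mathbf{x}^{\mathrm{r}*}\neq\mathbf{x}^{\mathrm{r}}_{\mathrm{o}}$ with $m_0=0$. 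So your plan does not yield a decrease in the distance to the optimal RCI set. What is provable under the stated hypotheses, and all the paper's proof establishes, is the inequality in terms of $\|m_0\|_Q$.
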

\begin{proof}
	From Lemma \ref{lemma:recursive_feasibility}, we know that \eqref{eq:TMPC} remains recursively feasible if \eqref{eq:EKF_constraint} holds. Let us call $\mathbf{x}^*=\mathbf{x}^*(\hat{x}_t,\hat{\theta}_t)$ its optimizer. Firstly, observe that the tracking components of the objective of \eqref{eq:TMPC} are nonnegative since $Q,P \succ 0$, and $\ell(\mathbf{x}^{\mathrm{r}*}) \geq r(\hat{\theta}_t)$ since $r(\hat{\theta}_t)$ is the optimal value of $\ell(\mathbf{x}^{\mathrm{r}})$ subject to \eqref{eq:RCI_cc}. Hence, $\mathcal{L}(\hat{x},\hat{\theta}) > 0$ whenever $(z^*_0,v^*_0) \neq (z^{\mathrm{s}}_{\mathrm{o}}(\hat{\theta}),v^{\mathrm{s}}_{\mathrm{o}}(\hat{\theta}))$, or equivalently if $\hat{x} \notin X(z^{\mathrm{s}}_{\mathrm{o}}(\hat{\theta}),s_{\mathrm{o}}(\hat{\theta}))$.
	Substituting the sequence in \eqref{eq:candidates} and denoting the deviation $m_{k}=(z^*_k-z^{\mathrm{s}*},v^*_k-v^{\mathrm{s}*})$ for $k \in \mathbb{I}_0^{N}$, it follows that
	\begin{align}
		\label{eq:Lyapunov_ineq_1}
		& \mathcal{L}(\hat{x}_{t+1},\hat{\theta}_{t+1})-\mathcal{L}(\hat{x}_{t},\hat{\theta}_{t}) \leq \\
		& \hspace{20pt}  -\|m_0\|_Q^2 +\|m_N\|_{Q+\gamma^2P-P}^2+r(\hat{\theta}_{t+1})-r(\hat{\theta}_t).  \nonumber
	\end{align}
	Clearly, $P \succeq (1-\gamma^2)^{-1}Q$ implies that the second term is nonpositive. Furthermore, observe that $\hat{\theta}$ enters the constraints \eqref{eq:RCI_cc} bilinearly with the primal variables in the strongly convex QP \eqref{eq:optimal_rci_problem}, such that its optimal value $r(\hat{\theta})$ is Lipschitz continuous in $\hat{\theta}$ \cite{Fiacco1990} under Linear Independence Constraint Qualification. Hence, there exists some  $L_r>0$ such that $ r(\hat{\theta}_{t+1})-r(\hat{\theta}_t) \leq L_r \|\hat{\theta}_{t+1}-\hat{\theta}_t\|$, and then
	\begin{align*}
		\scalemath{0.95}{\mathcal{L}(\hat{x}_{t+1},\hat{\theta}_{t+1})-\mathcal{L}(\hat{x}_{t},\hat{\theta}_{t}) \leq  -\|m_0\|_Q^2 + L_r \|\hat{\theta}_{t+1}-\hat{\theta}_t\|}
	\end{align*}
	holds, concluding the proof.
\end{proof}

From Theorem \ref{thm:stability}, we see that if \eqref{eq:general_observer} results in bounded updates such that $\|\hat{\theta}_{t+1}-\hat{\theta}_t\|<\infty$, then the cost increase is guaranteed to be bounded. For the EKF in \eqref{eq:EKF_steps}, this is guaranteed if the covariance matrix $\hat{P}_t$ remains bounded, which is expected when the linearized augmented system is uniformly observable \cite{Reif1999}. 
The convergence of the estimation scheme for the qLPV model class in the presence of time-varying constraints in \eqref{eq:EKF_constraint} is a subject of future research. 

\subsection{Active exploration}
\label{sec:persistence_excitation}
From Lemma \ref{lemma:recursive_feasibility}, we know that for any perturbation input $\{\up_t \in \beta \mathbb{U},t \in \mathbb{N}\}$ defining the control law in \eqref{eq:closed_loop:1}, \eqref{eq:TMPC} remains feasible. Accordingly, we define the problem
\begin{align}
	\label{eq:AL_problem}
	& \max_{\mathbf{u}} \ a(\mathbf{u}) \\
	& \ \ \text{s.t.} \ \ x_{k+1}=f(x_k,u_k,\hat{\theta}), \ x_0 = \hat{x}, \nonumber \\
	& \hspace{22pt} \ x_{k+1} \in X(z_{k+1}^*,s^*), \ u_k \in \mathbb{U}, \  k \in \mathbb{I}_0^{N^{\mathrm{p}}-1} \nonumber 
\end{align}
that is parametric in the current state and parameter $(\hat{x},\hat{\theta})$ estimated by \eqref{eq:general_observer}, along with the optimizer $\hat{\mathbf{x}}^*=\hat{\mathbf{x}}^*(\hat{x},\hat{\theta},\yr)$ of \eqref{eq:TMPC}. For some user-specified $N^{\mathrm{p}} \in \mathbb{N}$, the tube sequence in the constraints is defined with $s^*=s^*(\hat{x},\hat{\theta},\yr)$, $z_k^*=z_k^*(\hat{x},\hat{\theta},\yr)$ if $k \in \mathbb{I}_1^N$ and 
\begin{align}
	\label{eq:extended_tube}
	z^*_{k+1}=\gamma z^*_k+(1-\gamma)z^{\mathrm{s}*}(\hat{x},\hat{\theta},\yr)
\end{align}
otherwise. The active exploration objective $a(\mathbf{u})$ is defined over $\mathbf{u}=(u_0,\cdots,u_{N^{\mathrm{p}}-1})$. Some examples include the persistence of excitation criterion, e.g., \cite{Matej2013,Marafioti2013}; Fisher information, e.g., \cite{Lucia2014}; Predicted covariance minimization, e.g., \cite{Hovd2004,Heirung2015,Houska2017}; General nonlinear regression-type criteria, e.g., \cite{xie2025}. Defining $\mathbf{u}^*(\hat{x},\hat{\theta},\yr)$ as the optimizer of Problem \eqref{eq:AL_problem}, the closed-loop scheme is defined as
\begin{subequations}
	\label{eq:closed_loop_dual}
	\begin{align}
		&u_t =\mathbf{u}^*_0(\hat{x}_t,\hat{\theta}_t,\yr_t) \label{eq:closed_loop_dual:1} \\
		&(\hat{x}_{t+1},\hat{\theta}_{t+1}) \ \text{updated using \eqref{eq:general_observer} subject to \eqref{eq:EKF_constraint}}. \label{eq:closed_loop_dual:2}
	\end{align}
\end{subequations}
\begin{corollary}
	\label{corr:feasible}
	Suppose that Problem \eqref{eq:TMPC} with the initial state and parameter estimate $(\hat{x}_0,\hat{\theta}_0)$ is feasible. Then, \eqref{eq:TMPC}, \eqref{eq:AL_problem} and the constrained estimator \eqref{eq:general_observer} subject to \eqref{eq:EKF_constraint} remain recursively feasible irrespective of the reference $\{\yr_t,t \in \mathbb{N}\}$, and ISS-Lyapunov stable if the reference is held constant.
\end{corollary}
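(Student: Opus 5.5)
The plan is to reduce the corollary to Lemma \ref{lemma:recursive_feasibility} and Theorem \ref{thm:stability}. The one new ingredient is to show that the input $u_t=\mathbf{u}^*_0$ produced by \eqref{eq:AL_problem} has the same structure as the input in \eqref{eq:closed_loop:1}, i.e., it lies within the class of inputs those results already cover. The induction runs over $t$. The hypothesis at time $t$ is that \eqref{eq:TMPC} is feasible at $(\hat{x}_t,\hat{\theta}_t,\yr_t)$. From it we will deduce, in order: feasibility of \eqref{eq:AL_problem}, feasibility of the constrained estimator, and feasibility of \eqref{eq:TMPC} at time $t+1$.

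\textbf{Step 1 (feasibility of \eqref{eq:AL_problem}).} We construct a feasible $\mathbf{u}$ explicitly from the tube. At each $k$ we set $u_k=v^*_k+\sum_j\lambda_{k,j}c^*_j$ with $\up_k=0$, where $\lambda_k$ solves \eqref{eq:lambda_problem} with $x_k$ and $z_k^*$ in place of $\hat{x}$ and $z_0^*$. For $k\ge N$ we use the terminal/shifted choices $v^+$ from \eqref{eq:feasible_seq}, which are consistent with the extended centres \eqref{eq:extended_tube}.

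The argument in the first part of the proof of Lemma \ref{lemma:recursive_feasibility} then applies: multiply \eqref{eq:tube_to_hold} by $p_{k,i}$ and $\lambda_{k,j}$ and sum. This shows $x_{k+1}=f(x_k,u_k,\hat{\theta})\in X(z^*_{k+1},s^*)$ and $u_k\in(1-\beta)\mathbb{U}\subseteq\mathbb{U}$. For $k\ge N$, the validity of \eqref{eq:tube_to_hold} along the extended tube follows from the terminal inequality in \eqref{eq:TMPC}, after scaling by $\gamma$ as was done for \eqref{eq:terminal_inequality}. So the feasible set of \eqref{eq:AL_problem} is nonempty. If $a$ is continuous, this feasible set is compact, since $\mathbb{U}$ is compact and the remaining constraints are closed, so a maximizer exists.

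\textbf{Step 2 (matching the closed loop \eqref{eq:closed_loop}).} Let $u_t=\mathbf{u}^*_0$. Since $\hat{x}_t\in X(z_0^*,s^*)$, we can pick $\lambda$ from \eqref{eq:lambda_problem} and define $\up_t:=u_t-v^*_0-\sum_j\lambda_jc^*_j$. What Lemma \ref{lemma:recursive_feasibility} actually uses is not $\up_t\in\beta\mathbb{U}$ itself, but the consequence $f(\hat{x}_t,u_t,\hat{\theta}_t)\in X(z^*_1,s^*)$. That consequence is enforced directly here by the constraint $x_1\in X(z^*_1,s^*)$ in \eqref{eq:AL_problem}. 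I therefore intend to restate the first half of the proof of Lemma \ref{lemma:recursive_feasibility} using only this membership, which gives $z_0=z^*_1$ feasible for $F\hat{x}\le q+Fz_0$.

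The remaining steps of the lemma go through unchanged. The RCI variables $\xr=\mathbf{x}^{\mathrm{r}*}$ remain feasible, and the shifted candidate \eqref{eq:candidates} remains feasible; neither depends on $u_t$.

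\textbf{Step 3 (estimator and stability).} The set $\Theta(\mathbf{x}^*,d(\hat{\theta}_t))$ contains the nominal successor $(f(\hat{x}_t,u_t,\hat{\theta}_t),\hat{\theta}_t)$. Indeed, the first constraint holds by Step 2, the bound $\beta|FB_i|\epsilon^u\le d$ holds by \eqref{eq:d_defn}, and the remaining rows are those of the feasible solution at time $t$ and the candidate \eqref{eq:candidates}. Hence the constrained estimator \eqref{eq:general_observer}, \eqref{eq:EKF_constraint} is feasible. Lemma \ref{lemma:recursive_feasibility} then yields feasibility of \eqref{eq:TMPC} at $t+1$, which closes the induction for an arbitrary reference sequence.

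For a constant reference, the decrease bound in the proof of Theorem \ref{thm:stability} uses only the candidate \eqref{eq:candidates}, feasibility, and Lipschitz continuity of $r$. None of these depend on how $u_t$ was chosen inside the tube, so the same ISS-Lyapunov inequality holds.

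\textbf{Main obstacle.} I expect Step 2 to be the hardest. The input from \eqref{eq:AL_problem} is only constrained to $\mathbb{U}$, not to the decomposition $\uc+\up$ with $\up\in\beta\mathbb{U}$. The cleanest fix is to observe that the lemma's proof needs only $\hat{x}^+_{\mathrm{nom}}\in X(z_1^*,s^*)$. If that reformulation turned out to be insufficient, I would instead add the constraint $u_0-v_0^*-\sum_j\lambda_jc_j^*\in\beta\mathbb{U}$ to \eqref{eq:AL_problem}; Step 1 shows this constraint remains feasible.
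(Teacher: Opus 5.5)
Your proposal is correct and follows the same route as the paper's proof. Feasibility of \eqref{eq:AL_problem} comes from the tube, extended for $k\ge N$ via \eqref{eq:terminal_inequality} and \eqref{eq:RCI_cc}; recursive feasibility of \eqref{eq:TMPC} and of the estimator comes from Lemma~\ref{lemma:recursive_feasibility}; and stability comes from Theorem~\ref{thm:stability}. Your Step~2 remark is a useful explicit treatment of a point the paper's one-line proof leaves implicit: the maximizer of \eqref{eq:AL_problem} need not split as $\uc+\up$ with $\up\in\beta\mathbb{U}$, but the lemma only uses $f(\hat{x}_t,u_t,\hat{\theta}_t)\in X(z_1^*,s^*)$ (read the initial constraint as $F\hat{x}\le s+Fz_0$, as in the lemma's proof, not with $q$, for that membership to give what you need).
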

\begin{proof}
	The proof follows from Lemma \ref{lemma:recursive_feasibility}, with the tube sequence for $k \geq N$ in \eqref{eq:extended_tube} satisfying \eqref{eq:tube_to_hold} from \eqref{eq:terminal_inequality} and \eqref{eq:RCI_cc}. Then, stability follows from Theorem \ref{thm:stability}.
\end{proof}
Note that in \eqref{eq:AL_problem}, optimizing the input sequence online freely might be computationally expensive. To overcome this difficulty, we approximately solve \eqref{eq:AL_problem} using a sample-based approach. We first sample the set of perturbation trajectories
\begin{align}
	\label{eq:sample_perturbations}
	\mathrm{U}^{\mathrm{p}}:=\{\mathrm{u}^{\mathrm{p}}_j=(\up_0,\cdots,\up_{N^{\mathrm{p}}-1}) \in \beta \mathbb{U}, j \in \mathbb{I}_1^{M^{\mathrm{p}}}\},
\end{align}
and, for each $\mathrm{u}^{\mathrm{p}} \in  \mathrm{U}^{\mathrm{p}}$, we compute the corresponding nominal input sequence as $ \uc_k = v^*_k+\sum_{j=1}^{\ve}  \lambda_j c_j^*$,
with $v^*_k$ defined similarly as \eqref{eq:extended_tube} for $k>N$ and $\lambda$ computed as in \eqref{eq:lambda_problem} with $(\hat{x},z_0^*) \leftarrow (x_k,z_k^*)$, with the state $x_k$ propagated using the input $u_k=\uc_k+\up_k$. Evaluating the maximizing sequence, we apply the corresponding input to the plant.

\section{Numerical example}
For the purpose of numerical illustration, we consider the nonlinear mass-spring-damper system with dynamics{\footnote{Code to reproduce the results is found on \url{https://github.com/samku/DMPC_qLPV}}} 
\begin{align*}
	\mathrm{m}_{[1]}\ddot{\mathrm{x}}_{[1]}&=10 u-\mathrm{k}^{\mathrm{s}}(\mathrm{x}_{[1]})-\mathrm{k}^{\mathrm{d}}(\dot{\mathrm{x}}_{[1]})-\mathrm{k}^{\mathrm{s}}(\delta \mathrm{x})-\mathrm{k}^{\mathrm{d}}(\dot{ \delta \mathrm{x}}), \\
	\mathrm{m}_{[2]}\ddot{\mathrm{x}}_{[2]}&=-\mathrm{k}^{\mathrm{s}}(\mathrm{x}_{[2]})-\mathrm{k}^{\mathrm{d}}(\mathrm{x}_{[2]})+\mathrm{k}^{\mathrm{s}}(\delta \mathrm{x})+\mathrm{k}^{\mathrm{d}}(\dot{ \delta \mathrm{x}}),
\end{align*}
where $(\mathrm{x}_{[1]},\mathrm{x}_{[2]})$ are the positions of the masses. The input $u$ is the force applied on the first mass, and the output is the position of the second mass. Denoting $\delta \mathrm{x}=\mathrm{x}_{[1]}-\mathrm{x}_{[2]}$, the reaction forces are $\mathrm{k}^{\mathrm{s}}(x)=\mathrm{a}x+\mathrm{b}x^3$ and $\mathrm{k}^{\mathrm{d}}(v)=\mathrm{d}v+\mathrm{e} \cdot \mathrm{tanh}(v/\mathrm{v}_\mathrm{0})$ respectively. The plant parameters are $(m_{[1]},m_{[2]},\mathrm{a},\mathrm{b},\mathrm{d},\mathrm{e},\mathrm{v}_\mathrm{0})=(0.1,0.01,1,1,0.5,0.5,0.01)$ in appropriate units. We obtain \eqref{eq:plant} by discretizing this system, with integration performed using the Runge-Kutta integrator (Tsit5) from the $\mathrm{Diffrax}$ library \cite{diffrax}, with a time-step of $0.02$s using inputs $u \in \mathbb{U}=[-1,1]$. We parameterize \eqref{eq:model} with $n_x=2$, $n_p=3$, and the FNN in \eqref{eq:softmax} with a single hidden layer containing $3$ $\mathrm{swish}$ activation units, such that $n_{\theta}=42$. Finally, we utilize $C=[1 \ 0]$ as the output matrix. For the TMPC controller, we set $F=[\mathbf{I}_{n_x} \ -\mathbf{I}_{n_x}]^{\top}$, resulting in $\fe=\ve=4$, and compute the matrix $E \in \R^{8 \times \fe}$ defining \eqref{eq:cc_constraint} using \cite{Villanueva2024}. We tune the tracking costs as $Q=\mathbb{I}_{n_x+n_u}$ and $P=Q/(1-\gamma^2)$, and the optimal RCI set costs in \eqref{eq:ell_formulation} as $Q_1=10$ and $Q_2=\mathrm{blkdiag}(10^{-6}\mathbb{I}_{n_x+n_u},10\mathbb{I}_{\fe},\mathbb{I}_{\ve n_u},\mathbb{I}_{\fe})$.

For our simulations, we initialize the model and plant at the origin, and compute $\hat{\theta}_0$ following \cite{mulagaleti2025} using an input-output dataset $\mathcal{D}_{\mathrm{train}}$ of length $T=100$ points such that Problem \eqref{eq:TMPC} is feasible at $(\hat{x}_0,\hat{\theta}_0)$ with $\beta=0.3,\gamma=0.95$ and $N=2$. Evaluating the model quality using the mean-squared error (MSE) loss $\ell_2=\sum_{t=0}^{T-1} \|y_t-\hat{y}_t\|_2^2/T$, $\hat{\theta}_0$ achieves MSE = $0.003$ over the training dataset. We also measure another dataset $\mathcal{D}_{\mathrm{test}}$ of length $10000$ points, over which $\hat{\theta}_0$ achieves MSE = $0.198$. In these datasets, we scale the plant output by $2.5$ to aid in effective learning. In the test set, the output tightly belongs in $[-1.487,1.329]$. Accordingly, we define $\mathbb{Y}=[-0.8,0.8]$, such that the output constraint set is smaller than the reachable set. We use the EKF in \eqref{eq:EKF_steps}-\eqref{eq:correction_qp_con} as the estimator in \eqref{eq:general_observer} with $Q_{\mathrm{e}}=\mathrm{blkdiag}(10^{-6}\mathbf{I}_{n_x},0\mathbf{I}_{n_{\theta}})$, $R_{\mathrm{e}}=0.1$ and $\hat{P}_0=\mathrm{blkdiag}(\mathbf{I}_{n_x},10^{-4}\mathbf{I}_{n_{\theta}})$. A small value of initial covariance is selected over $\hat{\theta}_0$ to avoid large variations $\|\hat{\theta}_{t+1}-\hat{\theta}_t\|$ during simulation.
As the active exploration criterion in Problem \eqref{eq:AL_problem}, we adopt the persistence-of-excitation condition for linear systems, given as
\begin{align}
	\label{eq:pe_condition}
	a(\mathbf{u})=\scalemath{1.}{\min \mathrm{eigval}\left(\sum_{j=t-T}^{t} u_{j:j+N}u_{j:j+N}^{T}\right)},
\end{align}
from \cite{Marafioti2013}, where $T \in \mathbb{N}$ is the horizon of past inputs, and $u_{j:j+N}=(u_j,\cdots,u_{j+N}) \in \R^{Nn_u}$ is the vector formed using the input sequence. We select $T=10$ in our simulations. As described in Section \ref{sec:persistence_excitation}, alternative criteria designed for nonlinear systems might result in improved performance. Finally, we solve Problem \eqref{eq:AL_problem} with $N^{\mathrm{p}}=N$ through enumeration using $M^{\mathrm{p}}=20$ samples defining \eqref{eq:sample_perturbations}.

In Figure \ref{fig:tracking_performance} (top), we demonstrate the performance of the closed-loop scheme in \eqref{eq:closed_loop_dual}, in which the output reference $\yr_t$ is a piecewise constant signal varying between $0.8$ and $-0.8$. 
By ``$\mathrm{No \ adapt}$'' we indicate the case when EKF is not used to estimate the state or the parameters, such that $\hat{x}_{t+1}=f(\hat{x}_t,u_t,\hat{\theta}_0)$, $\forall t\in\mathbb{N}$. We report no benefit if only the state is updated using the EKF with parameters $\hat{\theta}_t=\hat{\theta}_0$. Instead, when \eqref{eq:closed_loop_dual:2} is executed to perform joint state-parameter estimation, the performance improves over time: both $\beta=0$ (no additional perturbation) and $\beta=0.3$ result in improved tracking performance. Furthermore, faster parameter convergence is obtained with $\beta=0.3$, indicated by the model output $\hat{y}_t$ tracking the plant output $y_t$ earlier. 

As noted in Remark \ref{remark:constraints}, our scheme does not account for output constraint violation by the plant, as seen in Figure \ref{fig:tracking_performance}. However, the amount of violation reduces as the model quality improves. In Figure \ref{fig:tracking_performance} (middle), we see that the Lyapunov cost reduces when the reference is held constant, validating Theorem \ref{thm:stability}. Exponential decrease is observed since the tracking costs dominate, because $\|\hat{\theta}_{t+1}-\hat{\theta}_t\|$ is maintained small, as shown in Figure \ref{fig:tracking_performance} (bottom). In Figure \ref{fig:tracking_performance} (middle), we also plot optimality loss due to projection in the EKF, i.e., \eqref{eq:EKF_constraint}, where $\hat{\zeta}^{\mathrm{u}}_t$ denotes the unconstrained update following \eqref{eq:EKF_steps}, and the constrained update is computed following \cite{Simon2010} as $\hat{\zeta}_{t}=\arg\min_{\zeta \in \Theta_{t}} \|\zeta-\hat{\zeta}^{\mathrm{u}}_t\|_{\hat{P}_t^{-1}}^2$. Finally, we report that the average computation time per step is $8$~ms using the $\mathrm{qpax}$ QP-solver \cite{tracy2024differentiability} implemented in $\mathrm{jax}$ \cite{jax2018github} on a Windows 11 laptop with Intel Core i9-14900HX processor and $32$GB of RAM, indicating real-time implementability.

\begin{figure}[t]
	\centering
	\includegraphics[width=\linewidth, trim=0 0 0 0, clip]{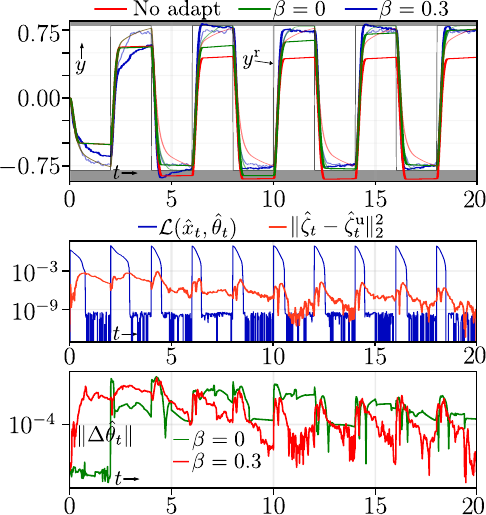}
	\caption{(Top) Tracking performance of \eqref{eq:closed_loop_dual}. The dull red, green and blue lines indicate the output trajectory $\hat{y}_t$ of \eqref{eq:model}, and the gray region indicates constraints; (Middle) Lyapunov cost defined in Theorem \ref{thm:stability}, and optimality loss due to projection \eqref{eq:EKF_constraint}; (Bottom) Variation in estimated parameter $\Delta \hat{\theta}_t=\hat{\theta}_{t+1}-\hat{\theta}_t$.}
	\label{fig:tracking_performance}
\end{figure}

We further analyze the performance of \eqref{eq:closed_loop_dual} over $10$ randomly generated piecewise constant reference trajectory realizations of $5000$ time steps. Using the plant output, we evaluate the tracking cost as $\mathrm{t}=\sum_{t=0}^{5000} \|y_t-\yr_t\|_2^2$, and constraint violation as $\mathrm{c}=\sum_{t=0}^{5000} \norm{\max_{i=1,2}\{H_iy_t-h_i,0\}^2}_1$, where $\mathbb{Y}=\left\{y \middle| Hy \leq h\right\}$. Finally, setting $\hat{\theta}^*=\hat{\theta}_{5000}$ as the identified model, we compute the MSE over the test dataset to evaluate model quality. In Table \ref{tab:adaptive_performance}, we present the mean and standard deviation statistics of these metrics over the reference trajectory realizations. First, we run the closed-loop without model adaptation. As expected, this results in a high tracking cost since the initial model does not capture the plant dynamics well. Then, adapting the model with \eqref{eq:general_observer}, we see that as the value of the active exploration constant $\beta$ increases, the tracking performance, constraint satisfaction, and final model quality improve. For small values of $\beta$, the constraint violation is worse since the system is excited beyond the output bounds before the model converges. For $\beta=0.3$, the fast model convergence results in the best overall tracking performance along with reduced constraint violation. We report that for $\beta>0.3$, the model output fails to track the reference, resulting in worse performance.

\section{Conclusions}
We have presented a dual MPC framework for nonlinear systems based on qLPV models, which is composed of a constrained estimator for state and parameters of the model, a TMPC controller, and a mechanism to perturb the inputs for the dual-control effect. The TMPC scheme is designed using configuration-constrained polytopes, and is guaranteed to be recursively feasible (Lemma \ref{lemma:recursive_feasibility}) and ISS-stable (Theorem \ref{thm:stability}) against parameter updates. Through a simple numerical example, the effectiveness of the framework is demonstrated using an EKF for parameter estimation, and active exploration criterion chosen to be the persistency-of-excitation condition.
Future research will focus on the incorporation of constraint tightening techniques to provide output constraint satisfaction guarantees, the analysis of suboptimality induced by the two-stage approach, and convergence of the constrained estimator scheme for qLPV systems.

\begin{table}[t]
	\centering
	\begin{threeparttable}
		\begin{tabular}{@{}lccc@{}}
			\toprule
			\textbf{Strategy} & \textbf{$\mathrm{t}$} & \textbf{$\mathrm{c}$} & \textbf{$\ell^{\mathrm{test}}_2(\hat{\theta}^*)$} \\ \midrule
			$\mathrm{No \ adapt}$ & $389.840 \pm 59.262$ & $0.146 \pm 0.208$ & $0.198 \pm 0.000$ \\
			$\beta=0$ & $350.037 \pm 76.337$ & $1.164 \pm 1.378$ & $0.045 \pm 0.037$ \\
			$\beta=0.1$ & $307.916 \pm 60.129$ & $3.668 \pm 3.617$ & $0.009 \pm 0.003$ \\
			$\beta=0.2$ & $243.600 \pm 37.850$ & $0.929 \pm 1.003$ & $0.007 \pm 0.002$ \\
			$\beta=0.3$ & $203.995 \pm 27.431$ & $0.180 \pm 0.230$ & $0.006 \pm 0.001$ \\ \bottomrule
		\end{tabular}
		\caption{Performance statistics (Mean and standard deviation) of \eqref{eq:closed_loop_dual} over different reference realization.}
		\label{tab:adaptive_performance}
	\end{threeparttable}
\end{table}

\bibliographystyle{ieeetr}
\bibliography{ieee_references}

\end{document}